\newcommand{\eps}{\epsilon}
\newcommand{\R}{\mathbb{R}}
\newcommand{\ud}{\text{d}}
\def\sdiv{\text{\texthtd}}
\def\scurl{\text{\texthtc}}
\def\stwist{\text{\texthtt}}
\newcommand{\coder}{\ud^*}
\newtheorem{theorem}{Theorem}[section]
\newtheorem{remark}[theorem]{Remark}
\newtheorem{lemma}[theorem]{Lemma}
\newtheorem{definition}[theorem]{Definition}
\newtheorem{proposition}[theorem]{Proposition}
\DeclareMathOperator*{\diverg}{div}
\def\clap#1{\hbox to 0pt{\hss#1\hss}}
\def\mathclap{\mathpalette\mathclapinternal}
\def\mathclapinternal#1#2{\clap{$\mathsurround=0pt#1{#2}$}}
\newcommand\blfootnote[1]{%
  \begingroup
  \renewcommand\thefootnote{}\footnote{#1}%
  \addtocounter{footnote}{-1}%
  \endgroup
}
\title[]{Gluing for the constraints for higher spin fields}
\author[J. Joudioux]{J\'er\'emie Joudioux}
\email{jeremie.joudioux@univie.ac.at}
\address{Gravitationsphysik, Fakult\"at f\"ur Physik, Universit\"at Wien, Boltzmanngasse 5, 1050 Wien, Austria}
\begin{document}
\blfootnote{Preprint Number: UwThPh-2017-3-7}
\maketitle
\begin{abstract} This short note is a follow-up on the paper by Beig and Chru\'sciel \cite{2017arXiv170100486B} regarding the use of potentials to perform a gluing and shielding of initial data for Maxwell fields and linearised gravity. Based on a work in collaboration with Andersson and B\"ackdahl \cite{MR3238530}, this gluing and shielding procedure is generalised to higher spin fields. The approach is based on a generalisation of the de Rham complex to higher spin fields providing a parametrization of the set of constraints, as well as standard elliptic theory to prove the existence of a potential.
\end{abstract}

\section*{Introduction}

In a recent paper  \cite{2017arXiv170100486B}, Beig and Chru\'sciel produced an elementary way to glue, and \emph{shield}, solutions to the linearised constraint equation by relying on the representation of solutions by a potential. Their elementary approach is motivated by the recent work of Carlotto and Schoen \cite{MR3539922} who performed a gluing and \emph{shielding} of initial data for the Einstein equations, whose Cauchy development is non interacting for an arbitrary long time. This \emph{shielding} of the gravitational effect is a relativistic phenomenon, in the sense that it does not occur in Newtonian gravity. This raises the question of whether this form of shielding holds for linearised gravity, and, by extension, to arbitrary spin fields.

The approach developed by Beig and Chru\'sciel is based on the remark that the solution to the constraint equations to linearised gravity can be integrated explicitly, as described by and Beig \cite{Beig:1997wo} (see also Gasqui-Goldschmidt \cite{2017arXiv170100486B}). Once a potential to a field is constructed, the gluing procedure is elementary.

The construction of Hertz potentials (satisfying a wave equation) for zero rest-mass fields has been introduced by Penrose to study the asymptotic behaviour of the null components of the field in the outgoing null direction (the \emph{peeling}). This approach to the asymptotic behaviour of zero rest-mass fields has been extended in the analytic context of the Cauchy problem by Andersson, B\"ackdahl and Joudioux \cite{MR3238530}. In particular, the relation between the initial datum of the zero rest-mass fields, satisfying the higher spin equivalent of geometric constraints for the Maxwell fields (spin one) and linearised gravity (spin two), and the initial data of the Hertz potential is described there in detail. This relation, obtained by a $3+1$-splitting of the representation by a Hertz potential, is described by an elliptic complex, which generalizes the de Rham complex (as well as the Gasqui-Goldschmidt-Beig complex \cite{Gasqui:1984vu,Beig:1997wo}) to higher spin fields in dimension $3$.

It is well-known that the de Rham complex describes the integrability property of the constraint equations for the Maxwell fields. In a similar fashion, the extension of the de Rham complex to higher spin fields described in \cite{MR3238530} provides an algebraic description of the geometric constraints for these higher spin fields. An important and interesting consequence, which we would like to emphasize here, of the approach of \cite{MR3238530} is that the set of solutions to the constraint equations, on $\mathbb{R}^3$, can be parametrized: more specifically, there exists a differential operator $\mathcal {G}$, between spinor fields of spin $s$, such that, if $\psi$ is any spinor field of spin $s$, then $\phi = \mathcal{G}\psi$ satisfies the constraint equation for spin-$s$ fields:
$$
D^{AB} \phi_{A \dots F} = 0,
$$
where $D$ is the standard connection on the Euclidean space $\mathbb{R}^3$. Using standard elliptic theory, the surjectivity of the operator $\mathcal{G}$, between appropriate weighted Sobolev spaces, is proven in \cite[Sections 3 and 4]{MR3238530}. An immediate by-product of this parametrization of the set of constraints is the possibility to perform, in an elementary way, gluing of solutions, as observed in \cite{2017arXiv170100486B} for the spin 1 and spin 2 cases.

More specifically, consider a spinor field of spin $s\geq 1$, $\phi_{A\dots F}$, satisfying the constraint equation for zero rest-mass fields,
$$
D^{AB} \phi_{A\dots F} = 0.
$$
Given $\Omega$ an open domain; consider an $\epsilon-$neighbourhood $\tilde{\Omega}_{\epsilon}$ of $\Omega$ obtained by $\cup_{x\in \Omega}B(x, \epsilon <x>)$ (with $<x> = (1+x^2)^{1/2}$). One proves that there exists a spinor field $\tilde \phi_{A\dots F}$, satisfying the the constraint equation
$$
D^{AB} \tilde{\phi}_{A\dots F} = 0
$$
agreeing with $\phi$ on $\Omega$, and $0$ outside a bigger set $\tilde{\tilde{\Omega}}_{\epsilon}$. Furthermore, if $\Omega$ is unbounded, and if $\phi$, regular enough, satisfies, for $\delta \in \mathbb{R}\setminus \mathbb{Z}$,
$$
\phi(x) =  \mathcal{O}(|x|^\delta) \text{ as }|x| \rightarrow +\infty
$$
then, in the overlapping region $\tilde{\tilde{\Omega}}_{\epsilon}\setminus \Omega$, $\tilde{\phi}$ satisfies
$$
\tilde{\phi}(x) =  \mathcal{O}(|x|^\delta) \text{ as }|x| \rightarrow +\infty.
$$

The paper is organized as follows. The first section  \label{sec:prel} contains basic preliminaries on spinors and weighted Sobolev spaces. The second section \label{sec:hertz} introduces the Hertz potentials. The third section \label{sec:complex} is a digest of \cite[Sections 3 and 4]{MR3238530}, and explains the elliptic complex for higher spin fields and relates it to the geometric constraint of zero rest-mass fields. The last section \label{sec:glue} contains the gluing result. The appendix contains the mechanism to construct a potential.

\section{Preliminaries}\label{sec:prel}

\subsection{Geometric context}
In this paper, the main focus is on $\mathbb{R}^3$, endowed with the Euclidean metric of \emph{negative signature}.  Nonetheless, since this problem comes from the study of the Cauchy problem for zero rest-mass fields, we specify, when appropriate, when we are working on $\mathbb{R}^4$. The situation is normally clear from context.

All along this paper, the Penrose notations for spinors (see \cite{Penrose:1986fk}) are heavily used, and we assume that the reader has some familiarity with it. Hence, when working on Minkowski space-time, we work the Minkowski metric of signature $(+,-,-,-)$.

We consider $\nabla$ the standard connection on $\mathbb{R}^4$, and $D$ the associated Sen connection on $\mathbb{R}^3$ given by
$$
\nabla_{AA'}=\tfrac{1}{\sqrt{2}}\tau_{AA'}\partial_t-\tau^{B}{}_{A'}D_{AB},
$$
where $\tau_{AA'}=\sqrt{2}\nabla_{AA'}t$.

\subsection{Introduction to space spinors}
We recall here the basics of space spinors. The reader who wishes to have a deeper overview on space spinors can refer to \cite{Sommers:1980dd} and \cite[Chapter 4]{MR3585918}. Let $p_{ab}$ be a 2-tensor; $g$ admits the space-time spinor $g_{AA'BB'}$ representation:
$$
p_{ab} =\psi_{AA'BB'}.
$$

Consider now the unit future oriented timelike vector $n^{AA'}$. This vector can be used as an isomorphism between primed and unprimed indices \footnote{The $\sqrt{2}$-factor is a mere normalization.}:
\begin{align}
\alpha_{A} &\mapsto \sqrt{2} n^{A}{}_{A'} \alpha_{A} \label{eq:iso1}\\
\alpha_{A'} &\mapsto \sqrt{2}n_{A}{}^{A'} \alpha_{A'} \label{eq:iso2}
\end{align}
This procedure can then be used to associate to any space-time spinor (with mixed prime and unprimed indices) a space spinor by means of the of the isomorphism \eqref{eq:iso2}. In the particular case of the tensor $g_{ab}$, the associated space spinor is
$$
\psi_{ABCD} = 2n_{A}{}^{A'} n_{B}{}^{B'}   g_{CA'DB'}.
$$
If the tensor $h_{ab}$ is a four dimensional spinor carrying a non trivial components in the direction $n$, the order of the indices matters. If the tensor is a spatial tensor, that is to say a 2-tensor of $\mathbb{R}^3$, the order of the indices does not matter.

\subsection{Weighted Sobolev spaces} We denote by $S_{2s}$ the bundle of space spinors over $\mathbb{R}^3$. The standard pointwise norm is denoted by $|\cdot|$.

Let $\delta$ in $\mathbb{R}\setminus \mathbb{Z}$. The completion of the space of smooth spinor fields in $S_{2s}$ with compact support in $\R^3$ endowed with the norm
$$
\Vert \phi_{A\dots F}\Vert^2_{j, \delta} = \sum_{n=0}^j \left\Vert <r>^{-(\delta+\frac32) + n} D^n\phi_{A\dots F}\right\Vert^2_{2},
$$
is denoted by $H^j_{\delta}(S_{2s})$, where $<r> = (1+r^2)^{1/2}$. These spaces are the generalization of the standard weighted Sobolev spaces to spinor sections.

Since we are working on $\mathbb{R}^3$, the standard properties associated with these spaces (Sobolev embeddings, Fredholm properties associated with elliptic operators, as described, for instance in \cite{Bartnik:1986dq}).

\section{Representations of zero rest-mass fields by Hertz potentials} \label{sec:hertz}

In \cite{MR3238530}, we addressed the question of constructing a Hertz potential, satisfying the wave equation, for zero-rest-mass fields in the context of the Cauchy problem. This construction was performed as follows:
\begin{itemize}
\item we performed a $3+1$-splitting of the Hertz potential equation; the resulting equation relates the initial datum of the zero-rest-mass field to the initial data of the Hertz potential;
\item this equation is integrated by means of an ad hoc elliptic complex, generalizing to higher spin fields the de Rham complex;
\item finally, the uniqueness of solutions to the Cauchy problem guarantees the representation formula.
\end{itemize}

More precisely, consider a zero-rest-mass spin-$s$ field $\phi_{A \dots F}$, \emph{\emph{i.e.}} a symmetric valence $2s$ spinor field on Minkowski space, which solves
\begin{equation}\label{cauchyprobmassless}
 \left\{
\begin{array}{l}
\nabla^{AA'}\phi_{A\dots F}=0, \\
\phi_{A\dots F}|_{t=0}=\phi_{A\dots F} \in H^{j}_{\delta}(S_{2s}).
\end{array}\right.
\end{equation}
For $s\geq 1$, this Cauchy problem is consistent only when the geometric constraint
\begin{equation}\label{geometricconstraint}
D^{AB} \phi_{ABC\dots F} = 0
\end{equation}
is satisfied.

A \emph{Hertz potential} for the massless spin-s field $\phi_{A \dots F}$ is a symmetric spinor field $\chi ^{A'\dots F'}$ satisfying the wave equation
\begin{equation}
  \square \chi ^{A'\dots F'} =  \eta^{ab}\nabla_a \nabla_b \chi ^{A'\dots F'} = 0
\end{equation}
such that
\begin{equation}
  \phi_{A \dots F} =\nabla_{AA'} \dots \nabla_{FF'} \chi ^{A'\dots F'}.
\end{equation}

The Hertz potentials in the context of zero rest-mass fields were introduced by Penrose \cite{MR0175590} to study the peeling properties of these fields, that is to say the asymptotic behaviour of null components in outgoing null directions. In \cite{MR3238530}, this approach was further developed in the analytic context of the Cauchy problem for higher spin fields with datum in weighted Sobolev spaces. This led to a detailed analysis of the asymptotic behaviour of fields, and, in particular, to a detailed description of the asymptotic behaviour of the initial data ensuring that the peeling property is fulfilled. The relation between the initial data of the field and Hertz potentials is \cite[Sections 3 and 4]{MR3238530} described in a precise fashion, as explained in the next section.

\section{Generalization of the de Rham complex to arbitrary spin} \label{sec:complex}
This section is essentially a digest of \cite[Sections 3 and 4]{MR3238530}. The aim is to generalize the de Rham complex to arbitrary spin, and describe, in terms of operators, the integrability condition on the Euclidean space $\mathbb{R}^3$  of the constraints for the massless higher spin fields:
$$
D^{AA'}\phi_{A\dots F} = 0.
$$
In the case of spin $1$, that is to say for the Maxwell equations, these constraints are
$$
 d F =0, \quad d^\star F = 0, \text{ on }\mathbb{R}^4
$$
that is to say the standard constraint equations for the electric and magnetic parts, $E$ and $B$, of the Faraday tensor:
 $$
d^\star E = d^\star B =  0 \text{ on }\mathbb{R}^3.
 $$
In a similar fashion, the spin-2 field equation \begin{equation}
  \nabla^a W_{abcd} = 0, \text{ on }\mathbb{R}^4
\end{equation}
where $W$ satisfies the symmetry of the Riemann tensor, admits constraint equations for the electric and magnetic part, $E$ and $B$, of $W$:
\begin{equation}
D^aE_{ab} = D^aB_{ab} = 0.
\end{equation}
These algebraic properties of the Maxwell fields and spin 2 fields are well described in \cite{Christodoulou:1990dd}.

For convenience, we introduce the following notations:
\begin{definition} \label{def:fundop}
Let $\phi_{A_1\dots A_k}$ be a spinor field of valence $k$.
Let $D_{AB}$ be the intrinsic Levi-Civita connection on $\mathbb{R}^3$. Define the operators
\begin{align*}
(\sdiv_k\phi)_{A_1\dots A_{k-2}}\equiv{}&D^{A_{k-1}A_{k}}\phi_{A_1\dots A_k},\\
(\scurl_k\phi)_{A_1\dots A_{k}}\equiv{}&D_{(A_1}{}^{B}\phi_{A_2\dots A_{k})B},\\
(\stwist_k\phi)_{A_1\dots A_{k+2}}\equiv{}&D_{(A_1 A_2}\phi_{A_3\dots A_{k+2})}.
\end{align*}
These operators are called \emph{divergence}, \emph{curl} and \emph{twistor operator} respectively.
\end{definition}
For a discussion on these operators, and their use, the reader can refer to \cite[Section 2.1]{MR3231172}.

The constraint equations \eqref{geometricconstraint} for a spin s field are then written
\begin{equation}
  \sdiv_{2s} \phi = 0
\end{equation}

We now define the operator which can be used to parametrize the constraint equations (see \cite[Definition 2.11]{MR3238530}):
\begin{definition}[Andersson, B\"ackdahl, Joudioux, \cite{MR3238530}]
Define the operators, of order $2s-1$: $\mathcal{G}_k:\mathcal{S}_k\rightarrow \mathcal{S}_{k}$ by
\begin{align*}
(\mathcal{G}_k\phi)_{A_1\dots A_k} &\equiv \sum_{n=0}^{\mathclap{\left\lfloor\tfrac{k-1}{2}\right\rfloor}}\binom{k}{2n+1}(-2)^{-n} \underbrace{D_{(A_1}{}^{B_1}\cdots D_{A_{k-2n-1}}{}^{B_{k-2n-1}}}_{k-2n-1} (\Delta^n_{k}\phi)_{A_{k-2n}\dots A_k)B_1\dots B_{k-2n-1}},
\end{align*}
where $\Delta_{k}$ is the Laplacian on spinors of valence $2k$.
\end{definition}
Particular cases (spin one and two) are discussed later.

It is now possible to state the elliptic complex describing the integrability properties of the constraints equation \eqref{geometricconstraint} (see \cite[Lemma 3.3]{MR3238530}): we remind here that a complex of differential operators, that is to say a sequence of differential operators, is elliptic, if the associated sequence made out of the symbols of these operators is exact.
\begin{lemma}[Andersson, B\"ackdahl, Joudioux, \cite{MR3238530}]\label{lem:ellipticcomplex}
The sequence
\begin{equation*} 
\mathcal{S}_{2s-2}\stackrel{\stwist_{2s-2}}{\longrightarrow} \mathcal{S}_{2s} \stackrel{\mathcal{G}_{2s}}{\longrightarrow} \mathcal{S}_{2s}\stackrel{\sdiv_{2s}}{\longrightarrow} \mathcal{S}_{2s-2},
\end{equation*}
is an elliptic complex.
\end{lemma}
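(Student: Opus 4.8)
The plan is to establish separately the two requirements: that the sequence is a complex ($\sdiv_{2s}\circ\mathcal{G}_{2s}=0$ and $\mathcal{G}_{2s}\circ\stwist_{2s-2}=0$), and that the associated sequence of principal symbols is exact at the two copies of $\mathcal{S}_{2s}$.

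For the complex property I would use the flatness of $\mathbb{R}^3$: the operators $D_{AB}$ commute, and $\mathcal{G}_{2s}$ is a differential combination of fully symmetrised iterated curls and powers of the scalar Laplacian $\Delta$ (for instance $\mathcal{G}_2=2\,\scurl_2$). The two identities then reduce to the algebraic commutation relations between $\sdiv$, $\scurl$, $\stwist$ and $\Delta$ on symmetric space spinors on $\mathbb{R}^3$ (collected in \cite[Section 2.1]{MR3231172}), the point being that the coefficients $\binom{2s}{2n+1}(-2)^{-n}$ in the definition of $\mathcal{G}_{2s}$ are exactly those making the intermediate terms cancel; this is the computation of \cite[Lemma 3.3]{MR3238530}. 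A softer route for $\sdiv_{2s}\circ\mathcal{G}_{2s}=0$: by construction, for every $\psi$ the spinor $\mathcal{G}_{2s}\psi$ is the restriction to $\{t=0\}$ of a zero rest-mass field generated by a Hertz potential, and such a field automatically satisfies the constraint \eqref{geometricconstraint}.

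For ellipticity it remains to verify exactness of the symbol sequence at a covector. All four operators are natural, hence $SO(3)$-equivariant, and $SO(3)$ acts transitively on rays, so it suffices to fix one nonzero covector $\xi_{AB}$. The decisive algebraic input is that a nonzero real covector is non-degenerate as a symmetric bispinor, $\xi_{A}{}^{C}\xi_{C}{}^{B}=\beta^{2}\delta_{A}{}^{B}$ with $\beta^{2}=-\tfrac12\,\xi^{DE}\xi_{DE}\neq0$. Hence $\xi_{A}{}^{B}$ is diagonalisable on the $2$-dimensional spin space with eigenvalues $\pm\beta$; choosing an eigenbasis $\{u_{A},w_{A}\}$, the fibre $\mathcal{S}_{2s}$ splits into the $2s+1$ one-dimensional lines spanned by $u^{\odot j}\odot w^{\odot(2s-j)}$, $0\le j\le 2s$, which are precisely the weight spaces of the $U(1)$ stabilising $\xi$. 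On this basis $\sigma(\stwist_{2s-2})(\xi)$ is multiplication by $\xi_{AB}\propto u_{(A}w_{B)}$, hence injective with image the span of the weights $j=1,\dots,2s-1$; dually, $\sigma(\sdiv_{2s})(\xi)$ is contraction with $\xi^{AB}\propto u^{(A}w^{B)}$, hence surjective with kernel the span of the weights $j=0$ and $j=2s$. Finally $\sigma(\mathcal{G}_{2s})(\xi)$ commutes with the stabilising $U(1)$, so it is diagonal in the weight basis; evaluating it on $u^{\odot j}\odot w^{\odot(2s-j)}$ and using $\binom{2s}{2n+1}=\binom{2s}{2s-2n-1}$ (to absorb the symmetrisation normalisation) together with $(-2)^{-n}(\xi^{DE}\xi_{DE})^{n}=\beta^{2n}$, the eigenvalue on the weight $j$ is a nonzero constant times
$$
S_{j}\;=\;\sum_{\substack{0\le m\le 2s-1\\ m\ \mathrm{odd}}}[x^{m}]\big((1+x)^{j}(1-x)^{2s-j}\big)\;=\;\tfrac12\big(f_{j}(1)-f_{j}(-1)\big),\qquad f_{j}(x)=(1+x)^{j}(1-x)^{2s-j},
$$
the second equality holding because $\deg f_{j}=2s$ is even, so every odd-degree term of $f_{j}$ has degree $\le 2s-1$. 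Since $f_{j}(1)=2^{j}0^{2s-j}$ and $f_{j}(-1)=0^{j}2^{2s-j}$, we get $S_{j}=0$ for $1\le j\le 2s-1$ and $S_{j}=\pm2^{2s-1}\neq0$ for $j\in\{0,2s\}$. Hence $\ker\sigma(\mathcal{G}_{2s})(\xi)$ is the span of the weights $j=1,\dots,2s-1$ and its image the span of $j=0$ and $j=2s$; comparing with the two computations above, the image of each arrow equals the kernel of the next, which gives the claimed exactness at both copies of $\mathcal{S}_{2s}$.

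The genuine obstacle is the complex property: unlike the ellipticity part, which collapses to the elementary identity $S_{j}=\tfrac12(f_{j}(1)-f_{j}(-1))$ once the $\xi$-eigenbasis is in place, the vanishing of $\sdiv_{2s}\circ\mathcal{G}_{2s}$ and of $\mathcal{G}_{2s}\circ\stwist_{2s-2}$ requires tracking the nested symmetrisations in $\mathcal{G}_{2s}$ and repeated use of the flat-space commutation relations, and it is there that the precise coefficients $\binom{2s}{2n+1}(-2)^{-n}$ are forced. A secondary point to check carefully is that passing to the $\xi$-eigenbasis does diagonalise $\sigma(\mathcal{G}_{2s})(\xi)$ with the stated eigenvalue $S_{j}$, i.e. that the binomial coefficients cancel against the symmetrisation normalisation exactly as claimed.
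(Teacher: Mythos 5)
Your symbol computation is correct and, as far as this note is concerned, self-contained: the reduction to a single covector by $SO(3)$-equivariance, the diagonalisation of $\sigma(\mathcal{G}_{2s})(\xi)$ in the $\xi$-eigenbasis $u^{\odot j}\odot w^{\odot(2s-j)}$ (the weights of the stabilising $U(1)$ are all distinct, so equivariance does force diagonality), and the identification of the eigenvalue with a nonzero multiple of $S_j$ all check out; in particular the factor $\binom{2s}{2n+1}=\binom{2s}{2s-2n-1}$ does cancel the hypergeometric normalisation coming from symmetrising the $m=2s-2n-1$ contracted slots, and $(-2)^{-n}$ combines with $\sigma(\Delta)^n$ uniformly in $n$, so the eigenvalue is $\mathrm{const}\cdot\beta^{2s-1}S_j$ with $S_j=0$ exactly for $1\le j\le 2s-1$. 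This matches the kernel of $\sigma(\sdiv_{2s})(\xi)$ (span of $j\in\{0,2s\}$) and the image of $\sigma(\stwist_{2s-2})(\xi)$ (span of $1\le j\le 2s-1$), so exactness of the symbol sequence at both copies of $\mathcal{S}_{2s}$ is proved. This is a genuinely different and arguably more elementary route than the one the note points to: here the lemma is simply imported from \cite{MR3238530}, where the complex and its ellipticity are obtained from operator identities of the type \eqref{LaplacianAsGeven}--\eqref{LaplacianAsGodd} (reproduced in Appendix \ref{app:meca}) and the commutation relations of $\sdiv$, $\scurl$, $\stwist$.

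The gap is the complex property itself. For $\sdiv_{2s}\circ\mathcal{G}_{2s}=0$ and $\mathcal{G}_{2s}\circ\stwist_{2s-2}=0$ you appeal to ``the computation of \cite[Lemma 3.3]{MR3238530}'', i.e.\ to the very statement being proved, and your own discussion concedes that this is where the specific coefficients $\binom{2s}{2n+1}(-2)^{-n}$ are forced; the symbol computation cannot substitute for it, since it only controls the principal part and says nothing about the lower-order terms of the compositions. The proposed ``softer route'' for $\sdiv_{2s}\circ\mathcal{G}_{2s}=0$ is also circular as stated: the assertion that $\mathcal{G}_{2s}\psi$ is the $t=0$ trace of a field generated by a Hertz potential is precisely the outcome of the $3+1$ splitting of \cite{MR3238530} (Theorem \ref{th:repthm} here), whose derivation contains the identities you are trying to avoid, and it would in any case leave $\mathcal{G}_{2s}\circ\stwist_{2s-2}=0$ untouched. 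To close the argument you would need to carry out the composition identities directly, e.g.\ by induction on $s$ using the flat-space commutators of $\sdiv$, $\scurl$, $\stwist$ and $\Delta$, or by verifying them in the polynomial (symmetric-algebra) model where your eigenbasis computation lives but now keeping all lower-order terms.
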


In the particular case of the spin 1, this elliptic complex is
\begin{equation}\label{exactseq1}
 \mathcal{S}_{0}\stackrel{\stwist_0}{\longrightarrow} \mathcal{S}_{2} \stackrel{\scurl_2}{\longrightarrow} \mathcal{S}_{2}\stackrel{\sdiv_2}{\longrightarrow} \mathcal{S}_{0}.
\end{equation}
and is the spinorial equivalent of the variation of the well-known de Rham complex
\begin{equation*}
 C^\infty(\R^3,\R)\stackrel{\ud}{\longrightarrow} \Lambda^1\stackrel{\star \ud}{\longrightarrow} \Lambda^1  \stackrel{\coder}{\longrightarrow}C^\infty(\R^3,\R).
\end{equation*}
In the spin-2 case, it generalizes the Gasqui-Goldschmidt-Beig complex (see \cite[Theorem 6.1  (2.24)]{Gasqui:1984vu} and \cite{Beig:1997wo}), stated here in the case of $\mathbb{R}^3$:
$$
\Lambda^1(\mathbb{R}^3) \stackrel{L}{\longrightarrow}  S_0^2(\mathbb{R}^3, g) \stackrel{\mathcal{R}}{\longrightarrow} S_0^2(\mathbb{R}^3, g) \stackrel{\diverg} {\longrightarrow}\Lambda^1(\mathbb{R}^3),
$$
where $\Lambda^1(\mathbb{R}^3)$ is the space of 1-forms over $\mathbb{R}^3$, $S_0^2(\mathbb{R}^3, g)$ is the space of symmetric trace free 2-tensors and
\begin{eqnarray*}
(LW)_{ab}&=&D_{(a}W_{b)}-\frac{1}{3}g_{ab}D^cW_c\\
(\diverg t)_a&=&2g^{bc}D_ct_{ab}
\end{eqnarray*}
and \begin{equation*}
\begin{array}{c}
\mathcal{R}(\psi)_{ab} = \eps^{cd}{}_a D_{[c}\sigma_{d]b} \text{ where}\\
\sigma_{ab}=D_{(a} D^c \psi_{b)c} - \frac{1}{2} \Delta \psi_{ab} - \frac{1}{4} g_{ab}D^c D^d \psi_{cd}.
\end{array}
\end{equation*}
In terms of spinors, the spin 2 complex is
\begin{equation}\label{exactseq2}
\mathcal{S}_2 \stackrel{\stwist_2}{\longrightarrow}  \mathcal{S}_4 \stackrel{\mathcal{G}_4}{\longrightarrow} \mathcal{S}_4 \stackrel{\sdiv_4} {\longrightarrow}\mathcal{S}_2.
\end{equation}
where the operator $\mathcal{G}_4$ is related to the linearised Cotton-York tensor $\mathcal{R}$ by the relation:
$$
\mathcal{R}_{ab} = \mathcal{R}_{ABCD} = -\frac{i}{2\sqrt{2}} \mathcal{G}_4.
$$

The next step is to provide an analytic framework to solve the equation
\begin{equation}
\sdiv_{2s} \phi = 0.
\end{equation}
This is done in \cite[Proposition 4.6]{MR3238530}.
\begin{proposition}[Andersson, B\"ackdahl, Joudioux, \cite{MR3238530}]\label{proprepspingeneral}
Let $\delta$ be in $\mathbb{R}\setminus\mathbb{Z}$, $j>0$ integer, $\phi_{A\dots F}$ in $\ker \sdiv_{2s} \cap H^j_\delta(S_{2s})$. Then there exist a spinor field $\tilde{\phi}_{A\dots F}\in H^{j+2s-1}_{\delta+2s-1}(S_{2s})$ and a constant $C$ depending only on $\delta$ and $j$ such that
\begin{align*}
\phi_{A\dots F}&=(\mathcal{G}_{2s}\tilde{\phi})_{A\dots F},\\
\Vert\tilde{\phi}_{A\dots F}\Vert_{j+2s-1,\delta+2s-1}&\leq C \Vert\phi_{A\dots F}\Vert_{j,\delta}.
\end{align*}
\end{proposition}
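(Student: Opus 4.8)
The plan is to turn the problem into the solvability of a single constant–coefficient elliptic equation on $\mathbb{R}^{3}$ and then to feed this into the weighted elliptic theory (Bartnik, Lockhart--McOwen). Out of the complex of Lemma \ref{lem:ellipticcomplex} I would build a Hodge–type Laplacian on $\mathcal{S}_{2s}$; since $\stwist_{2s-2},\mathcal{G}_{2s},\sdiv_{2s}$ have orders $1$, $2s-1$, $1$, the naive combination $\mathcal{G}_{2s}\mathcal{G}_{2s}^{*}+\sdiv_{2s}^{*}\sdiv_{2s}$ is not order–homogeneous, so I would use instead
$$
P:=\mathcal{G}_{2s}^{\,2}+\stwist_{2s-2}\,(\sdiv_{2s}\stwist_{2s-2})^{2s-2}\,\sdiv_{2s}\;:\;\mathcal{S}_{2s}\longrightarrow\mathcal{S}_{2s}.
$$
Because every summand in the definition of $\mathcal{G}_{2s}$ carries exactly $2s-1$ derivatives, $\mathcal{G}_{2s}$ is homogeneous of order $2s-1$; hence $P$ is homogeneous of order $4s-2$, constant–coefficient and $\mathrm{SO}(3)$–invariant, and $\sdiv_{2s}\stwist_{2s-2}$ is an elliptic second–order operator on $\mathcal{S}_{2s-2}$. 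The ellipticity of $P$ is where Lemma \ref{lem:ellipticcomplex} enters: at a covector $\xi\neq 0$ the symbol exactness of the complex gives the splitting $\mathcal{S}_{2s}=\operatorname{im}\sigma(\mathcal{G}_{2s})(\xi)\oplus\operatorname{im}\sigma(\stwist_{2s-2})(\xi)$, the summands being $\ker\sigma(\sdiv_{2s})(\xi)$ and $\ker\sigma(\mathcal{G}_{2s})(\xi)$, with the sum direct because $\sigma(\mathcal{G}_{2s})(\xi)$ is normal (as $\mathcal{G}_{2s}$ agrees, up to a unit constant factor, with its own formal adjoint, visibly for $\scurl_{2}$ and for $\mathcal{R}$). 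With respect to this splitting $\sigma(P)(\xi)$ is block diagonal, equal to $\sigma(\mathcal{G}_{2s})^{2}$ on the first block (injective there since $\operatorname{im}\sigma(\mathcal{G}_{2s})\cap\ker\sigma(\mathcal{G}_{2s})=0$) and to a nonzero multiple of $|\xi|^{4s-2}$ on the second. Thus $P$ is elliptic, and $\mathrm{SO}(3)$–invariance forces $\det\sigma(P)(\xi)$ to be a constant times $|\xi|^{2M}$; the adjugate of $P$ therefore composes with $P$ to a power $\Delta^{M}$ of the flat scalar Laplacian, so that any solution of $Pu=0$ is polyharmonic and hence, in a weighted Sobolev space on $\mathbb{R}^{3}$, a polynomial.

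By the weighted elliptic theory on $\mathbb{R}^{3}$ the map $P:H^{k+4s-2}_{w+4s-2}(\mathcal{S}_{2s})\to H^{k}_{w}(\mathcal{S}_{2s})$ is then Fredholm for every $w\notin\mathbb{Z}$ (its exceptional weights are integers, being controlled by $\Delta^{M}$), with finite–dimensional kernel and cokernel consisting of polynomial solutions. Given $\phi\in\ker\sdiv_{2s}\cap H^{j}_{\delta}$, I would solve
$$
P\tilde\phi=\mathcal{G}_{2s}\phi\in H^{j-2s+1}_{\delta-2s+1}(\mathcal{S}_{2s}),
$$
choosing $\tilde\phi$ orthogonal to $\ker P$; since $P$ has order $4s-2$ the solution lies in $H^{j+2s-1}_{\delta+2s-1}(\mathcal{S}_{2s})$, which is precisely the target space of the Proposition, and the weighted elliptic estimate for $P$ gives $\Vert\tilde\phi\Vert_{j+2s-1,\delta+2s-1}\leq C\Vert\mathcal{G}_{2s}\phi\Vert_{j-2s+1,\delta-2s+1}\leq C'\Vert\phi\Vert_{j,\delta}$ with $C'$ depending only on $j$ and $\delta$. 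Solvability is immediate once $\delta-2s+1\geq-3$, because then the $L^{2}$–duality $\big(H^{0}_{w}\big)^{*}=H^{0}_{-w-3}$ makes $\operatorname{coker}P$ vanish; for the remaining non–integer $\delta$ one checks $\mathcal{G}_{2s}\phi\perp\operatorname{coker}P$ by pairing against the (polynomial) cokernel, the boundary term in the integration by parts vanishing precisely because $\delta\notin\mathbb{Z}$, and using $\sdiv_{2s}\phi=0$ together with the exactness of the complex on polynomials.

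It remains to see that this $\tilde\phi$ really solves $\mathcal{G}_{2s}\tilde\phi=\phi$. Applying $\mathcal{G}_{2s}$ to $P\tilde\phi=\mathcal{G}_{2s}\phi$ and using $\mathcal{G}_{2s}\stwist_{2s-2}=0$ annihilates the second summand of $P$, yielding $\mathcal{G}_{2s}^{\,3}\tilde\phi=\mathcal{G}_{2s}^{\,2}\phi$; hence $\psi:=\mathcal{G}_{2s}\tilde\phi-\phi\in H^{j}_{\delta}$ satisfies $\mathcal{G}_{2s}^{\,2}\psi=0$, and also $\sdiv_{2s}\psi=\sdiv_{2s}\mathcal{G}_{2s}\tilde\phi-\sdiv_{2s}\phi=0$ by Lemma \ref{lem:ellipticcomplex} and the hypothesis. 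The symbols of $\mathcal{G}_{2s}^{\,2}$ and $\sdiv_{2s}$ have no common kernel (again by Lemma \ref{lem:ellipticcomplex} and normality of $\sigma(\mathcal{G}_{2s})$, using $\ker\sigma(\mathcal{G}_{2s}^{\,2})=\ker\sigma(\mathcal{G}_{2s})$), so $\psi$ solves an $\mathrm{SO}(3)$–invariant elliptic system, is therefore polyharmonic, and so is a polynomial of degree below $\delta$. By the exactness of the complex on polynomials — the \emph{Poincaré lemma} reflecting the contractibility of $\mathbb{R}^{3}$, together with the grading under which $\mathcal{G}_{2s}$ lowers the degree by $2s-1$ — there is a polynomial $\tilde\psi$ of degree below $\delta+2s-1$, hence in $H^{j+2s-1}_{\delta+2s-1}$, with $\mathcal{G}_{2s}\tilde\psi=\psi$. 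Then $\tilde\phi-\tilde\psi$ is the desired potential, and since $\tilde\psi$ is obtained from $\psi$ (and $\psi$ from $\tilde\phi$ and $\phi$) by bounded, essentially finite–dimensional operations, the estimate is preserved.

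The genuinely hard part is the bookkeeping of the finite–dimensional polynomial data — $\ker P$, $\operatorname{coker}P$, and the intermediate space of polynomial solutions of $\mathcal{G}_{2s}^{\,2}\psi=0=\sdiv_{2s}\psi$ — and, underpinning it all, the exactness of the complex on polynomials: this is the one non–formal input, it is what is really at stake in \cite[Sections 3 and 4]{MR3238530}, and the hypothesis $\delta\notin\mathbb{Z}$ is needed precisely so that every polynomial which is forced to appear has degree strictly below the weight of the space into which it must be absorbed — which is exactly why the potential is claimed only in weight $\delta+2s-1$.
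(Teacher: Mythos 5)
Your overall architecture (build a Hodge-type operator $P$ out of the complex, invert it on weighted spaces, correct by polynomials) is plausible, but three steps that carry essentially the whole weight of the proposition are asserted rather than proven, so as written there are genuine gaps. (i) Ellipticity of $P$: the symbol exactness of Lemma \ref{lem:ellipticcomplex} gives $\operatorname{im}\sigma(\stwist_{2s-2})=\ker\sigma(\mathcal{G}_{2s})$ and $\operatorname{im}\sigma(\mathcal{G}_{2s})=\ker\sigma(\sdiv_{2s})$, but it does \emph{not} give the splitting $\mathcal{S}_{2s}=\ker\sigma(\sdiv_{2s})\oplus\ker\sigma(\mathcal{G}_{2s})$; for that you need $\operatorname{im}\sigma(\mathcal{G}_{2s})\cap\ker\sigma(\mathcal{G}_{2s})=0$, i.e.\ normality (formal self-adjointness up to a constant) of $\mathcal{G}_{2s}$ for every spin, which you only check ``visibly'' for $\scurl_2$ and the linearised Cotton--York operator; the same unproven fact is used again at the end via $\ker\sigma(\mathcal{G}_{2s}^{2})=\ker\sigma(\mathcal{G}_{2s})$. (ii) Solvability of $P\tilde\phi=\mathcal{G}_{2s}\phi$: the claim that the cokernel vanishes whenever $\delta-2s+1\geq-3$, and otherwise that orthogonality to a polynomial cokernel holds ``because $\delta\notin\mathbb{Z}$'', is precisely the hard analytic point and is not an argument — for homogeneous constant-coefficient elliptic systems on weighted spaces the cokernel is governed by the dual weight and does not disappear on the range you state, so the orthogonality of $\mathcal{G}_{2s}\phi$ to it must actually be established (and for $j<2s-1$ your right-hand side sits in a negative-order space, a case the proposition allows but your setup ignores). (iii) The final correction invokes ``exactness of the complex on polynomials'', which is nowhere proven and does not follow formally from Lemma \ref{lem:ellipticcomplex}; you yourself flag it as the one non-formal input, but leaving it unproven leaves the proof incomplete.

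For comparison, the proof this note relies on (\cite[Sections 3 and 4]{MR3238530}, mechanism sketched in Appendix \ref{app:meca}) does not introduce any auxiliary operator $P$: the explicit identities \eqref{LaplacianAsGeven}--\eqref{LaplacianAsGodd} factor a power of the Laplacian through the operators of the complex, so one takes a preimage $\hat\phi$ of $\phi$ under $\Delta^{k}$ in the weighted spaces, uses $\sdiv\,\Delta^{k}\hat\phi=\Delta^{k}\sdiv\hat\phi=0$ and the triviality of the relevant kernel to discard the $\stwist\mathcal{F}\sdiv$ term, and reads off the potential as a multiple of $\scurl\hat\phi$; the weights at which the Laplacian fails to be bijective are handled by the Helmholtz-type decompositions of \cite[Lemmata 4.1 and 4.4]{MR3238530}. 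The non-formal input there is the pair of structural identities plus the mapping properties of $\Delta$ on weighted spaces — not polynomial exactness. To salvage your route you would have to prove (i)--(iii), and in doing so you would essentially be re-deriving those identities, which give the potential directly.
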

\begin{remark} The mechanism of the construction of a potential is explained in Appendix \ref{app:meca}.
\end{remark}

Finally, though it is a bit outside the scope of this note, we summarize here the relation between the zero rest-mass field, its Hertz potential, and their respective initial data, in the context of the Cauchy problem, as done in \cite[Theorem 7.6]{MR3238530}.
\begin{theorem}[Andersson, B\"ackdahl, Joudioux, \cite{MR3238530}]\label{th:repthm} Let $s$ be in $\tfrac{1}{2}\mathbb{N}$, $\delta$ be in $\R\setminus\mathbb{Z}$ and $j\geq2$ an integer. We consider $\phi_{A\dots F}$ in $H^j_\delta(S_{2s})$ satisfying the constraint equation $D^{AB}\phi_{A\dots F} =0$.
Then there exists a spinor field $\tilde{\phi}_{A\dots F}$, solving the equation
$$
\phi_{A\dots F} = (\mathcal{G}_{2s}\tilde{\phi})_{A\dots F}
$$
and satisfying the estimates
\begin{equation*}
\Vert\tilde{\phi}_{A\dots F}\Vert_{j+2s-1,\delta+2s-1}\leq C \Vert\phi_{A\dots F}\Vert_{j,\delta}.
\end{equation*}
Furthermore, the unique solution of the Cauchy problem for massless fields \eqref{cauchyprobmassless} with the initial datum $\phi_{A\dots F}$ is given by
$$
\phi_{A\dots F} = \nabla_{AA'}\dots \nabla_{FF'}\widetilde{\chi}^{A'\dots F'},
$$
where the spinor field $\chi_{A\dots F}$, defined by
$$
\chi_{A\dots F}= \tau_{AA'}\cdots \tau_{FF'}\widetilde\chi^{A'\dots F'},
$$ satisfies the Cauchy problem for the wave equation with initial data $(0,\sqrt{2}\tilde{\phi}_{A\dots F})$.
\end{theorem}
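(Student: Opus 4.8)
The plan is to decompose the statement into a purely elliptic part on $\mathbb{R}^3$ and a hyperbolic part on Minkowski space. The elliptic part --- the existence of $\tilde{\phi}_{A\dots F}$ with $\phi_{A\dots F}=(\mathcal{G}_{2s}\tilde{\phi})_{A\dots F}$ together with the estimate $\Vert\tilde{\phi}_{A\dots F}\Vert_{j+2s-1,\delta+2s-1}\leq C\Vert\phi_{A\dots F}\Vert_{j,\delta}$ --- is immediate from Proposition~\ref{proprepspingeneral}: the hypothesis $D^{AB}\phi_{A\dots F}=0$ is exactly $\sdiv_{2s}\phi=0$, so $\phi_{A\dots F}\in\ker\sdiv_{2s}\cap H^j_\delta(S_{2s})$ and the proposition applies verbatim. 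Thus the remaining work is to show that the massless evolution of $\phi$ coincides with the field derived from a Hertz potential built out of $\tilde{\phi}$.

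For that, one first solves, componentwise, the scalar wave equation $\square\chi_{A\dots F}=0$ on $\mathbb{R}^{1+3}$ with Cauchy data $\chi_{A\dots F}|_{t=0}=0$ and $\partial_t\chi_{A\dots F}|_{t=0}=\sqrt{2}\,\tilde{\phi}_{A\dots F}$, using the standard well-posedness of the wave equation in weighted Sobolev spaces; this yields a symmetric valence-$2s$ space spinor field $\chi_{A\dots F}$ of the stated regularity. Transporting it to a primed-index field $\widetilde{\chi}^{A'\dots F'}$ by the $\tau$-isomorphism~\eqref{eq:iso2}, and using that $\tau$ is covariantly constant on Minkowski space, one still has $\square\widetilde{\chi}^{A'\dots F'}=0$. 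Set $\Phi_{A\dots F}:=\nabla_{AA'}\cdots\nabla_{FF'}\widetilde{\chi}^{A'\dots F'}$.

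It then remains to check two points. First, $\Phi_{A\dots F}$ solves the massless field equation: this is the classical Hertz computation, in which one uses that on flat space all the $\nabla$'s commute, and that contracting one unprimed index against one primed index across a pair of them produces a factor $\square$ acting on $\widetilde{\chi}$, so that $\nabla^{AA'}\Phi_{A\dots F}$ reduces to derivatives of $\square\widetilde{\chi}^{A'\dots F'}=0$. Second, and this is the core of the statement, $\Phi_{A\dots F}|_{t=0}=\phi_{A\dots F}$: substituting the split $\nabla_{AA'}=\tfrac{1}{\sqrt{2}}\tau_{AA'}\partial_t-\tau^{B}{}_{A'}D_{AB}$ into the $2s$-fold product defining $\Phi$, expanding, restricting to $t=0$, and iteratively replacing $\partial_t^{2m}\chi|_{t=0}$ by $\Delta^m\chi|_{t=0}=0$ and $\partial_t^{2m+1}\chi|_{t=0}$ by $\Delta^m(\sqrt{2}\,\tilde{\phi})$ through the wave equation, collapses the restriction --- once the combinatorial bookkeeping of the binomial weights is carried out --- to exactly $(\mathcal{G}_{2s}\tilde{\phi})_{A\dots F}$; this is precisely the $3+1$-splitting computation of \cite[Sections 3--4]{MR3238530}. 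Combined with the elliptic part, this gives $\Phi_{A\dots F}|_{t=0}=\phi_{A\dots F}$. Since $\Phi$ and the solution of the Cauchy problem~\eqref{cauchyprobmassless} both solve the massless field equation with the same initial datum, and each component of a massless field satisfies a wave equation so that uniqueness holds by the standard energy estimate in the weighted spaces, the two fields coincide; this yields the representation formula and finishes the argument.

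The main obstacle is the second point: verifying that the coefficients produced by expanding the $2s$-fold product of split connections and then eliminating all higher-order time derivatives via $\square\widetilde{\chi}=0$ reassemble precisely into the operator $\mathcal{G}_{2s}$ of Andersson--B\"ackdahl--Joudioux. This requires careful tracking of the index symmetrizations --- the $\tau$-contractions mix primed and unprimed indices and, as recalled in the preliminaries, their ordering is significant --- as well as of the regularity and weight bookkeeping, so that the loss of $2s-1$ derivatives and the shift of the weight by $2s-1$ between $\phi$ and $\tilde{\phi}$ are consistent with the wave evolution and with the estimate of Proposition~\ref{proprepspingeneral}.
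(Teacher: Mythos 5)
The paper itself gives no proof of this theorem: it is imported verbatim from \cite[Theorem 7.6]{MR3238530}, with only the three-step scheme ($3+1$ splitting of the Hertz potential equation, integration of the resulting constraint via the elliptic complex, uniqueness for the Cauchy problem) sketched in Section~\ref{sec:hertz}. Your proposal follows exactly that scheme --- the elliptic part is Proposition~\ref{proprepspingeneral} applied verbatim, and the hyperbolic part (wave evolution with data $(0,\sqrt{2}\tilde{\phi})$, the flat-space Hertz identity, identification of the $t=0$ restriction with $(\mathcal{G}_{2s}\tilde{\phi})_{A\dots F}$, and uniqueness), with the combinatorial identification deferred to the $3+1$-splitting computation of \cite{MR3238530}, is precisely the argument of the cited source --- so it is essentially the same approach.
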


\section{Gluing for constraints for higher spin fields}\label{sec:glue}

This section contains the actual new mathematical contents of this note. It provides, on $\mathbb{R}^3$, a generalization to higher spin fields of \cite{2017arXiv170100486B}, in the context of weighted Sobolev spaces, and provides a detailed analysis of the possible asymptotic behaviour as $r$ goes to $\infty$. It must be noted that the approach developed in \cite{MR3238530} to prove the existence of initial data for the Hertz potential is based on basic elliptic theory, while the approach developed in \cite{2017arXiv170100486B} relies on a representation formula (from application of the Poincar\'e lemma) for potentials.

If $\Omega$ is an open set, consider $\Omega_\epsilon$ the $\epsilon$-neighbourhood of $\Omega$, that is to say the set
$$
\Omega_\epsilon =  \{x \in \mathbb{R}^3 |\exists y \in \Omega, x \in B(y, \epsilon) \}.
$$
We also define the neighbourhood of $\Omega$:
$$
\tilde{\Omega}_\epsilon =  \{x \in \mathbb{R}^3 |\exists y \in \Omega_{\epsilon}, x \in B(y, <x>\epsilon) \},
$$
and the neighbourhood of $\Omega_{\epsilon}$:
$$
\tilde{\tilde{\Omega}}_\epsilon =  \{x \in \mathbb{R}^3 |\exists y \in \tilde{\Omega}_{\epsilon}, x \in B(y, <x>\epsilon) \}
$$
We then provide two results of gluing. The first result holds for a small gluing region, $\Omega_{\epsilon}\setminus \Omega$, and contains a loss of control in the asymptotic region of the gluing while the second holds for a large gluing neighbourhood, and contains no loss of control in the asymptotic region of the gluing.
\begin{proposition}\label{prop1} Let $\delta$ be in $\mathbb{R}\setminus \mathbb{Z}$, $k$ be an integer, $\epsilon$ in $\mathbb{R}^+$, and $\Omega$ an (non-empty) open set of $\mathbb{R}^3$. Consider the solution to the constraint equation for spin-s fields
  $$
\sdiv_{2s} \phi = 0 \text{ with } \phi \in H^k_{\delta}(S_{2s}).
  $$
  There exists a spin-s field $\tilde{\phi}$ in $H^{k}_{\delta+k+2s-1}(S_{2s})$ satisfying
  \begin{itemize}
    \item $\tilde{\phi}$ satisfies the constraint equation
    $$
\sdiv_{2s}\tilde{\phi} = 0;
    $$
    \item on $\Omega$, $\phi = \tilde \phi$;
    \item on the complement of ${\Omega}_{\epsilon}$, $\tilde \phi = 0$.
  \end{itemize}
Furthermore, if $k\geq 2$, and if $\Omega$ is unbounded, for any $\ell \leq k-2$, then $D^\ell\tilde \phi$ has the following asymptotic behaviour in  $\Omega_{\epsilon}\setminus \Omega$:
$$
D^\ell\tilde{\phi}(x) = \mathcal{O}(|x|^{\delta+k+2s-1-\ell}) \text{ as } |x| \rightarrow +\infty, \text{ for } x \in {\Omega}_{\epsilon}\setminus \Omega.
$$
\end{proposition}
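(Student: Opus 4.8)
The plan is to reduce the entire statement to the potential parametrisation of Proposition~\ref{proprepspingeneral} followed by a single cut-off, in the spirit of the spin~$1$ and spin~$2$ arguments in \cite{2017arXiv170100486B}. First I would apply Proposition~\ref{proprepspingeneral} to $\phi\in\ker\sdiv_{2s}\cap H^k_{\delta}(S_{2s})$: this yields a potential $\psi_{A\dots F}\in H^{k+2s-1}_{\delta+2s-1}(S_{2s})$ with $\phi=\mathcal{G}_{2s}\psi$ and $\|\psi\|_{k+2s-1,\delta+2s-1}\le C\|\phi\|_{k,\delta}$. Then I would fix a cut-off function $\chi\in C^\infty(\mathbb{R}^3)$ with $\chi\equiv 1$ on $\Omega_{\epsilon/3}$, $\operatorname{supp}\chi\subset\Omega_{2\epsilon/3}\subset\Omega_\epsilon$, and $|D^n\chi|\le C_n\epsilon^{-n}$ uniformly on $\mathbb{R}^3$ for each $n$ — such a $\chi$ exists (for instance by mollifying $\mathbf 1_{\Omega_{\epsilon/2}}$ at scale $\epsilon/6$), precisely because the transition layer $\Omega_{2\epsilon/3}\setminus\Omega_{\epsilon/3}$ has width bounded below by $\epsilon/3$ everywhere, even when $\Omega$ is unbounded. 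The candidate field is
$$
\tilde\phi_{A\dots F}:=\bigl(\mathcal{G}_{2s}(\chi\psi)\bigr)_{A\dots F}.
$$

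The three structural properties are then immediate. Since the sequence of Lemma~\ref{lem:ellipticcomplex} is a complex, $\sdiv_{2s}\circ\mathcal{G}_{2s}=0$, so $\sdiv_{2s}\tilde\phi=0$. Since $\mathcal{G}_{2s}$ is a differential, hence local, operator, $\tilde\phi(x)$ depends only on the germ of $\chi\psi$ at $x$: for $x\in\Omega$ this germ coincides with that of $\psi$ (as $\chi\equiv1$ on the neighbourhood $\Omega_{\epsilon/3}\supset\Omega$), whence $\tilde\phi=\mathcal{G}_{2s}\psi=\phi$; for $x\notin\Omega_\epsilon$ this germ vanishes (as $\operatorname{supp}\chi$ lies at distance $\ge\epsilon/3$ from $x$), whence $\tilde\phi=0$.

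For the weighted estimate and the asymptotics I would propagate the weights through the Leibniz rule, and this is where the loss announced before the statement appears. The point is that multiplication by $\chi$ does \emph{not} map $H^{k+2s-1}_{\delta+2s-1}$ to itself when $\Omega$ is unbounded: the derivatives $D^n\chi$ are bounded but do not decay, and are supported where $<r>$ is unbounded. Expanding $D^n(\chi\psi)=\sum_{a+b=n}\binom na (D^a\chi)(D^b\psi)$ and comparing the weight $<r>^{-\beta-3/2+n}$ attached to $(D^a\chi)(D^b\psi)$ with the weight $<r>^{-(\delta+2s-1)-3/2+b}$ that $\|\psi\|_{k+2s-1,\delta+2s-1}$ controls, one is led to require $\beta\ge\delta+2s-1+a$; the extreme term $a=n=k+2s-1$, $b=0$ forces $\beta=\delta+k+4s-2$. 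Hence $\chi\psi\in H^{k+2s-1}_{\delta+k+4s-2}(S_{2s})$ with norm $\le C(\epsilon,k,\delta,s)\|\phi\|_{k,\delta}$; and since $\mathcal{G}_{2s}$ is a constant-coefficient differential operator of order $2s-1$ on the flat $\mathbb{R}^3$, it maps $H^{k+2s-1}_{\delta+k+4s-2}$ continuously into $H^{k}_{\delta+k+2s-1}$, giving $\tilde\phi\in H^k_{\delta+k+2s-1}(S_{2s})$ with the corresponding bound. Finally, for $k\ge2$ the weighted Sobolev embedding on $\mathbb{R}^3$ turns this into the pointwise bound $|D^\ell\tilde\phi(x)|\le C<r>^{\delta+k+2s-1-\ell}$ for $\ell\le k-2$; on $\Omega$ one has the sharper bound inherited from $\tilde\phi=\phi\in H^k_\delta$, and outside $\Omega_\epsilon$ one has $\tilde\phi\equiv0$, so $\Omega_\epsilon\setminus\Omega$ is exactly the region in which that growth estimate is the best available.

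The only genuinely delicate step is this weight bookkeeping: one has to keep track of the fact that every derivative falling on $\chi$ costs one unit of admissible weight — instead of improving the decay, as it would if it fell on $\psi$ — and that the extreme case of all $k+2s-1$ derivatives falling on $\chi$ is what produces the exponent $\delta+k+2s-1$. The complex identity $\sdiv_{2s}\circ\mathcal{G}_{2s}=0$, the locality argument, and the construction of $\chi$ with the stated bounds are routine; the contrast with the second, loss-free gluing statement is entirely that the fixed-width layer is replaced by one of width $\sim\epsilon<r>$, for which $|D^n\chi|\lesssim(\epsilon<r>)^{-n}$ and no weight is lost.
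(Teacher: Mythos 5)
Your proposal is correct and follows essentially the same route as the paper: take the potential $\psi\in H^{k+2s-1}_{\delta+2s-1}$ from Proposition \ref{proprepspingeneral}, multiply by a fixed-scale cut-off adapted to $\Omega\subset\Omega_\epsilon$, set $\tilde\phi=\mathcal{G}_{2s}(\chi\psi)$, and use the complex property, locality, the Leibniz-rule weight bookkeeping, and the weighted Sobolev embedding of Bartnik for the pointwise decay. Your explicit accounting ($\chi\psi\in H^{k+2s-1}_{\delta+k+4s-2}$, then $\mathcal{G}_{2s}$ of order $2s-1$ landing in $H^{k}_{\delta+k+2s-1}$) is the same computation the paper performs, just stated more carefully than in the paper's displayed estimate.
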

\begin{proof} By convolution, it is possible to construct a smooth function $\chi_{\Omega}$ in $W^{k+2s+1,\infty}(\mathbb{R}^3)$ such that $\chi_{\Omega}$ is equal to 1 in a neighbourhood of $\Omega$, and vanishes in the complement $\Omega_{\epsilon}$.

   Since $\phi$ belongs to the kernel of $\sdiv_{2s}$, and to the Sobolev space $H^k_{\delta}(S_{2s})$, by Proposition \ref{proprepspingeneral}, there exists a preimage $\psi$ of $\phi$, by the operator $\mathcal{G}_{2s}$, in $H^{k+2s-1}_{\delta+2s-1}(S_{2s})$, such that
  $$
\phi = \mathcal{G}_{2s} \psi.
  $$
The field defined by
$$
\tilde{\phi} = \mathcal{G}_{2s}\left(\chi_{\Omega} \psi  \right).
$$
satisfies the gluing properties.

Furthermore, noticing that, for $p \leq k+2s-1 $
\begin{eqnarray*} \label{eq:sum}
<x>^{p-k-2s+1}\vert D^p \left(\tilde{\chi} \phi \right) \vert  & \leq&  C  \left( \sum_{ 0 \leq l\leq p} <x>^{p-k-2s+1}  \vert D^{p-l}{\tilde{\chi}} D^l\phi\vert \right)\\
 & \leq&  C |\Vert \tilde{\chi} \Vert_{\infty, k+2s-1} \left( \sum_{ 0 \leq l\leq p}  \vert <x>^{(p-l)-k-2s+1 +l}  D^l\phi\vert \right)\\
  & \leq&  C |\Vert \tilde{\chi} \Vert_{\infty, k+2s-1} \left( \sum_{ 0 \leq l\leq p}  \vert  <x>^l D^l\phi\vert \right),
\end{eqnarray*}
since $p-l$ is at most $k+2s-1$, one obtains, that
there exists a constant $C$
  $$
\Vert \chi \psi \Vert_{k+2s-1, \delta + k+2s1}\leq C \Vert\chi \Vert _{W^{k+2s-1,\infty}} \cdot \Vert \psi \Vert_{k+2s-1, \delta},
  $$
Hence, $\mathcal{G}_{2s}(\chi \tilde\psi)$ belongs to $H^{k}_{\delta+k+2s+1}(S_{2s})$.

Furthermore, for $k\geq 2$, and $\Omega$ unbounded, using the standard weighted Sobolev estimates (see \cite[Theorem 1.2, (iv)]{Bartnik:1986dq}), one obtains, since $\tilde\phi \in H^{k}_{\delta+k+2s-1}(S_{2s})$, for any $\ell\leq k-2$:
$$
 D^\ell \tilde{\phi}(x) = \mathcal{O}(|x|^{\delta+k+2s-1-\ell}) \text{ as } |x| \rightarrow +\infty \text{ for } x \in \tilde{\Omega}_{\epsilon}\setminus \Omega.
$$
\end{proof}

It is possible to lose no decay in the gluing region if one increases its size:
\begin{proposition}\label{prop2} Let $\delta$ be in $\mathbb{R}\setminus \mathbb{Z}$, $k$ be an integer, $\epsilon$ in $\mathbb{R}^+$, and $\Omega$ an (non-empty) open set of $\mathbb{R}^3$. Consider the solution to the constraint equation for spin-s fields
  $$
\sdiv_{2s} \phi = 0 \text{ with } \phi \in H^k_{\delta}(S_{2s}).
  $$
  There exists a spin-s field $\tilde{\phi}$ in $H^k_{\delta}(S_{2s})$ satisfying
  \begin{itemize}
    \item $\tilde{\phi}$ satisfies the constraint equation
    $$
\sdiv_{2s}\tilde{\phi} = 0;
    $$
    \item on $\Omega$, $\phi = \tilde \phi$;
    \item on the complement of $\tilde{\tilde{\Omega}}_{\epsilon}$, $\tilde \phi = 0$.
  \end{itemize}
Furthermore, if $k\geq 2$, and if $\Omega$ is unbounded, for any $\ell \leq k-2$, then $D^\ell\tilde \phi$ has the same decay rate as $D^\ell\phi$ in the overlap region $\tilde{\Omega}_{\epsilon}\setminus \Omega$:
$$
D^\ell\tilde{\phi}(x) = \mathcal{O}(|x|^{\delta-\ell}) \text{ as } |x| \rightarrow +\infty, \text{ for } x \in \tilde{\tilde{\Omega}}_{\epsilon}\setminus \Omega.
$$
\end{proposition}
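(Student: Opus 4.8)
The plan is to re-run the proof of Proposition \ref{prop1}, changing only --- but crucially --- the cut-off function. By Proposition \ref{proprepspingeneral} I would first pick a potential $\psi\in H^{k+2s-1}_{\delta+2s-1}(S_{2s})$ with $\phi=\mathcal{G}_{2s}\psi$ and $\Vert\psi\Vert_{k+2s-1,\delta+2s-1}\leq C\Vert\phi\Vert_{k,\delta}$, and then set $\tilde\phi=\mathcal{G}_{2s}(\chi\psi)$ for a cut-off $\chi$ to be specified. That $\sdiv_{2s}\tilde\phi=0$ is immediate from Lemma \ref{lem:ellipticcomplex}: the sequence being a complex forces $\sdiv_{2s}\circ\mathcal{G}_{2s}=0$. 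If $\chi\equiv1$ on a neighbourhood of $\overline{\Omega}$ and $\chi$ vanishes on a neighbourhood of the complement of $\tilde{\tilde{\Omega}}_{\epsilon}$, then, $\mathcal{G}_{2s}$ being a differential operator and hence local, $\tilde\phi=\mathcal{G}_{2s}\psi=\phi$ on $\Omega$ and $\tilde\phi=0$ off $\tilde{\tilde{\Omega}}_{\epsilon}$. So the first three properties hold for any such $\chi$, just as in Proposition \ref{prop1}.

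The entire improvement comes from the \emph{shape} of $\chi$. The sets involved are conical: at Euclidean distance $r$ from the origin, the shells $\Omega_\epsilon\subset\tilde{\Omega}_{\epsilon}\subset\tilde{\tilde{\Omega}}_{\epsilon}$ are separated by a width comparable to $\epsilon r$, which is exactly what the factors $<x>\epsilon$ in the definitions of $\tilde{\Omega}_{\epsilon}$ and $\tilde{\tilde{\Omega}}_{\epsilon}$ provide. Consequently one can construct, by mollification at scale $\sim\epsilon<x>$, a cut-off $\chi$ obeying the \emph{scale-invariant} derivative bounds
\[
|D^p\chi(x)|\leq C_p<x>^{-p},\qquad 0\leq p\leq k+2s-1,
\]
instead of the mere boundedness $|D^p\chi|\leq C_p$ used in Proposition \ref{prop1}. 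With such a $\chi$, the Leibniz rule gives, for $0\leq p\leq k+2s-1$,
\[
<x>^{-(\delta+2s-1+\frac32)+p}\,|D^p(\chi\psi)|\leq C\sum_{l=0}^{p}<x>^{-(\delta+2s-1+\frac32)+l}\,|D^l\psi|,
\]
because every derivative landing on $\chi$ is compensated by a factor $<x>^{-(p-l)}$ with $p-l\geq0$; squaring and integrating yields $\Vert\chi\psi\Vert_{k+2s-1,\delta+2s-1}\leq C\Vert\psi\Vert_{k+2s-1,\delta+2s-1}$, with \emph{no loss of weight} (this is precisely where Proposition \ref{prop1} was forced to pay $k+2s-1$ powers of $<x>$). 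Since $\mathcal{G}_{2s}$ is a differential operator of order $2s-1$ with constant coefficients on Euclidean $\R^3$, it maps $H^{k+2s-1}_{\delta+2s-1}(S_{2s})$ into $H^{k}_{\delta}(S_{2s})$, whence $\tilde\phi\in H^k_\delta(S_{2s})$ and $\Vert\tilde\phi\Vert_{k,\delta}\leq C\Vert\phi\Vert_{k,\delta}$.

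For the asymptotics, assume $k\geq2$ and $\ell\leq k-2$. Then $D^\ell\tilde\phi\in H^{k-\ell}_{\delta-\ell}(S_{2s})$ with $k-\ell\geq2>3/2$, so the weighted Sobolev embedding (\cite[Theorem 1.2, (iv)]{Bartnik:1986dq}) gives $D^\ell\tilde\phi(x)=\mathcal{O}(|x|^{\delta-\ell})$ as $|x|\to\infty$; the same embedding applied to $\phi$ shows this is exactly the decay rate of $D^\ell\phi$, which is the assertion in the overlap region $\tilde{\tilde{\Omega}}_{\epsilon}\setminus\Omega$. The one genuinely new ingredient, and the main technical obstacle, is the construction of the conical cut-off $\chi$ satisfying the scale-invariant bounds: one must check that the nested neighbourhoods really leave a relative gap of size $\sim\epsilon$ between consecutive members, uniformly in $r$, so that a rescaled mollifier has the advertised properties. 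Everything else is a transcription of the proof of Proposition \ref{prop1} with this cut-off inserted.
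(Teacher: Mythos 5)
Your proposal is correct and follows essentially the same route as the paper: a pre-image $\psi$ of $\phi$ under $\mathcal{G}_{2s}$ from Proposition \ref{proprepspingeneral}, a cut-off built by mollifying the characteristic function of $\tilde{\Omega}_{\epsilon}$ at the variable scale $\epsilon<x>$ so that $|D^p\chi|\lesssim <x>^{-p}$, and then the same Leibniz and weighted Sobolev embedding arguments, with no loss of weight. The paper simply carries out explicitly the convolution construction and support computation that you flag as the main technical point.
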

\begin{proof} The proof runs exactly as the previous proof. The main difference lies in the choice of the cut-off function to perform the mollification. Consider $\psi$ a pre-image of $\phi$ by $\mathcal{G}_{2s}$ in $H^{k+2s-1}_{\delta+2s-1}(S_{2s})$.

Let $\rho$ be a non negative smooth function, bounded by $1$, and with support in the ball of center the origin and radius $\epsilon$. Assume that
 $$
\int_{\mathbb{R}^3} \rho(x) dx =1.
$$
 Let $\chi_{\tilde{\Omega}_{\epsilon}}$ be the characteristic function of the set
 $$
\tilde{\Omega}_{\epsilon} =  \cup _{x\in\Omega} B(x, <x>\epsilon).
 $$
 Define the convolution, for all $x$ in $\mathbb{R}^3$,
$$
\tilde \chi (x)= \int_{\mathbb{R}^3} \chi_{\tilde{\Omega}_{\epsilon}} (y) \chi\left(\dfrac{x-y}{<x>}\right) \dfrac{dy}{\left(<x>\epsilon\right)^3} = \int_{\mathbb{R}^3} \chi_{\tilde{\Omega}_{\epsilon}} (x- <x>\epsilon u )\rho(u) du .
$$
By construction, the support of $\tilde{\chi}$ is contained in
$$
\{x \in\mathbb{R}^3 | B(x, <x>\epsilon) \cap \tilde{\Omega}_{\epsilon}\neq \emptyset \} \subset \cup _{x\in\tilde{\Omega}_{\epsilon} } B(x, <x>\epsilon).
$$
Furthermore, using $\int_{\mathbb{R}^3} \rho(u) du = 1$, one obtains
$$
1- \tilde \chi (x) = \int_{\mathbb{R}^3} (1 - \chi_{\tilde{\Omega}_{\epsilon}}) (x- <x>\epsilon u )\rho(u) du =  \int_{\mathbb{R}^3}  \chi_{\tilde{\Omega}^c_{\epsilon}} (x- <x>\epsilon u )\rho(u) du,
$$
where $\tilde{\Omega}^c_{\epsilon}$ is the complement of $\tilde{\Omega}_{\epsilon}$ in $\mathbb{R}^3$. Hence $1-\tilde{\chi}$
 vanishes on
$$
\{x \in\Omega | B(x, <x>\epsilon) \cap \tilde{\Omega}^c_{\epsilon}= \emptyset \} = \tilde{\tilde{\Omega}}_{\epsilon}.
$$
Finally, we notice that, following the same argument as in Equation \eqref{eq:sum}
\begin{eqnarray*}
\vert D^k \tilde{\chi} (x)\vert\leq  \dfrac{C(\epsilon)}{<x>^k}\sum_{ 0 \leq p \leq k}\Vert D^p \rho \Vert_{\infty},
\end{eqnarray*}
where $C(\epsilon)$ is a constant depending on $\epsilon$. Hence, by the same argument as in Equation \eqref{eq:sum}, one obtains that
$$
\Vert\tilde{\chi} \psi  \Vert_{k+2s-1, \delta} \leq C \Vert \psi  \Vert_{k+2s-1, \delta}.
$$

The rest of the proof goes as the proof of Proposition \ref{prop1}.

\end{proof}

\appendix
\section{Mechanism of the construction of a potential}\label{app:meca}

We explain here the mechanism that leads to the construction of a potential for solutions to the geometric constraints for higher spin fields. This calculation is based on the assumption that the Laplacian is bijective (this would be the case for spinors with rapidly decaying coefficients, or for fields in $H^k_{0}(S_{2s})$, with $\delta$ in $(0,1)$). To handle the case when the Laplacian is not surjective or injective, further insights on the structure of the kernel of $\sdiv$ is required, and more specifically, a Helmholtz type decomposition involving $\mathcal{G}$, $\stwist$ and the Laplacian (see \cite[Lemmata 4.1 and 4.4]{MR3238530}) is required.

We deal first with the case of the Maxwell equations in the formalism of forms: consider a 1-form $E$ on the Euclidean $\mathbb{R}^3$ satisfying the Maxwell constraints:
$$
d^\star E = 0.
$$
Let $\hat{E}$ be a preimage by the Laplacian of $E$:
\begin{equation}\label{eq:a1}
E = \Delta \hat{E} =  (dd^{\star} + d^{\star} d) \hat{E},
\end{equation}
where $d^{\star} = (-1) \star d \star$, $\star$ being the Hodge dual. Applying the operator $d^{\star}$ to $E$ leads to
$$
d^{\star} dd^{\star}\hat{ E} =  (dd^{\star} + d^{\star} d)  d^{\star} \hat{E} = 0,
$$
that is to say that $d^{\star} \hat{E}$ belongs to the kernel of the Laplacian, and vanishes. Hence, $E$ can be written:
$$
E =  d^{\star} d  \hat{E} =  \star d \star d \hat{E} = \star d \tilde{E},
$$
where $\tilde{E} =\star d \hat{E}$ is the researched $1$-form.

The same approach can in fact be followed to construct a potential for solutions of the constraint equation for higher spin fields
$$
\sdiv_{2s} \phi = 0.
$$
To this end, one needs to find a relation between the operators of the elliptic complex of Lemma  \ref{lem:ellipticcomplex}, and powers of the Laplacian. These relations are stated explicitly in \cite[Equations (2.2a) and (2.2b)]{MR3238530}. We state here these equations for a spinor of valence $2k$ and $2k+1$ respectively:
\begin{subequations}
\begin{align}
(\Delta^{k}_{2k}\phi)_{A_1\dots A_{2k}}={}&(\stwist_{2k-2}\mathcal{F}_{2k-2}\sdiv_{2k}\phi)_{A_1\dots A_{2k}}-(-2)^{1-k}(\mathcal{G}_{2k}\scurl_{2k}\phi)_{A_1\dots A_{2k}},\label{LaplacianAsGeven}\\
 (\Delta^{k}_{2k+1}\phi)_{A_{1}\dots A_{2k+1}}={}&
(\stwist_{2k-1}\mathcal{F}_{2k-1}\sdiv_{2k+1}\phi)_{A_1\dots A_{2k+1}}
+(-2)^{-k}(\mathcal{G}_{2k+1}\phi)_{A_1\dots A_{2k+1}} \label{LaplacianAsGodd},
\end{align}
\end{subequations}
where the operators $\mathcal{F}_{2s}$ for $s\in \tfrac{1}{2}\mathbb{N}_0$ are defined via
\begin{align*}
(\mathcal{F}_{2s}\phi)_{A_1\dots A_{2s}}={}&
2^{-2s}\sum_{n=0}^{\lfloor s\rfloor}\sum_{m=0}^{\lfloor s\rfloor-n}\binom{2s+2}{2n+2m+2}(-2)^{n}\nonumber\\
&\quad\times \underbrace{D_{(A_1}{}^{B_1}\cdots D_{A_{2n}}{}^{B_{2n}}}_{2n} (\Delta^{\lfloor s\rfloor-n}_{2s}\phi)_{A_{2n+1}\dots A_{2s})B_1\dots B_{2n}}.
\end{align*}

Once these formulae are recovered, the same argument as for the Maxwell constraint equations can be followed. To simplify the notations, we drop the labels in $k$ indicating the valence of the spinor on which the operator is applied to. We perform the calculation in the integer spin case. Let $\phi$ be a solution to the equation
$$
\sdiv \phi = 0.
$$
Let $\hat{\phi}$ be a preimage by $\Delta^k$:
$$
\phi = \Delta^k \hat{\phi} = \stwist \mathcal{F} \sdiv \hat{\phi} +\frac12 \mathcal{G} \scurl \hat{\phi}.
$$
We want to prove that
$$
\stwist \mathcal{F} \sdiv \hat{\phi} = 0.
$$
To this end, we notice that, since $\Delta^k$ and $\sdiv$ commute,
$$
\sdiv \Delta ^k \hat{\phi} = \Delta ^k \sdiv \hat{\phi} = 0.
$$
Hence, $\sdiv \hat{\phi}$ vanishes, and
$$
\phi = \frac12 \mathcal{G} \scurl \hat{\phi}.
$$
Our potential $\tilde{\phi}$ is then given by
$$
\tilde{\phi} = \frac12\scurl \hat{\phi}.
$$
\section*{Acknowledgment}
I wish to thank R. Beig and P. Chru\'sciel who suggested to type this note, and for their comments, and T. B\"ackdahl and L. Andersson who allowed me to use previous material. I also wish to thank L. Andersson, P. Chru\'sciel, D. H\"afner and J.-P. Nicolas for their continuous support.

The author is supported in part by the ANR grant AARG, "Asymptotic Analysis in General Relativity" (ANR-12-BS01-012-01).

\printbibliography

\end{document}